\newtheorem{theorem}{Theorem}[section]
\newtheorem{lemma}[theorem]{Lemma}
\newtheorem{claim}[theorem]{Claim}
\newcommand{\junk}[1]{}
\newcommand{\ignore}[1]{}
\newcommand \E {{\rm E}}
\newcommand{\eps}{\varepsilon}
\newcommand{\initOneLiners}{%
    \setlength{\itemsep}{0pt}
    \setlength{\parsep }{0pt}
    \setlength{\topsep }{0pt}
}
\def\showauthornotes{1}
\newcommand{\Authornote}[2]{{\sf\small\color{red}{[#1: #2]}}}
\newcommand{\Authornote}[2]{}
\begin{document}
\title{A PTAS for the Classical Ising Spin Glass Problem  \\ on the
Chimera Graph Structure}
\author{Rishi Saket\thanks{IBM T. J. Watson Research Center. Email: {\tt
rsaket@us.ibm.com}}}
\maketitle
\begin{abstract}
We present a polynomial time approximation scheme (PTAS) for
the minimum value of the classical Ising Hamiltonian with linear terms
on the \emph{Chimera} graph structure as defined in the recent work of 
McGeoch and
Wang~\cite{MW13}. 
The result follows from a direct application of the techniques used by
Bansal, Bravyi and Terhal~\cite{BBT} who gave a PTAS for the same
problem on planar and, in particular, grid graphs. We also show that
on Chimera graphs, the trivial lower bound is within a constant factor
of the optimum.
\end{abstract}

\section{Introduction}
The classical Ising spin glass problem is defined as follows. 
%\subsection{Classical Ising Spin Glass}
Given a graph $G(V, E)$ along with real numbers $d_u$ for all vertices
$u$
and  $c_{uv}$ for all edges $(u,v)$,  
the classical Ising spin glass problem is to
the minimize the following Hamiltonian,
\begin{equation}
H(S) := \displaystyle\sum_{(u,v)\in E}c_{uv}S_uS_v +
\displaystyle\sum_{u\in V}d_uS_u, \label{eqn-hamil}
\end{equation}
over all $\{-1,1\}$ assignments to the vertices given by 
$S = \{S_u \in \{-1,1\}\}_{u\in V}$. It is useful to note that 
$\E_S[H(S)] = 0$, and thus the optimum of $H(S)$ is non-positive.

The Ising spin glass problem is used to model the interactions in 
physical spin systems and its minimum 
value is a measure of the ground state-energy 
of the system. As such this computational problem has received
significant attention from both, the algorithmic and complexity
perspectives. For a detailed
discussion on this problem we refer the reader to the related work of
Bansal, Bravyi and Terhal~\cite{BBT} who
gave a PTAS for this
problem on planar graphs.
In this work we focus on the
approximability of this problem on the \emph{Chimera} graph structure
which we formally define below. 

\medskip
\noindent
{\bf The Chimera Graph Structure.}
We shall work with the Chimera graph structure as defined in
\cite{MW13}, with a different notation for convenience. For a
positive integer $r$, the Chimera graph $G_r$ is constructed as follows.

\medskip
\noindent
\emph{Vertices $V$.} The set of vertices $V$ consists of the
integer tuples $\{(i,j,k,l)\in \mathbb{Z}^4\ \mid \ 1\leq i,j\leq r, 1\leq k
\leq 4, l=0,1\}$. The indices $i$ and $j$ specify the location
in a
$r\times r$ grid, $k$ indicates
one of four possible vertices, and
$l$ specifies the \emph{layer} in which the vertex belongs. The number of
vertices is $8r^2$.

\medskip
\noindent
\emph{Edges $E$.} The edge set is a disjoint union of $E_0, E_1$ and
$E_{01}$, where $E_0$ is the set of edges in layer $0$, $E_1$ is the
set of edges in layer $1$, and $E_{01}$ is the set of edges across the
the two layers. These sets of edges are defined as follows:
\begin{itemize}
\item[$E_{0}$:] For any $1\leq i\leq r-1$, $1\leq j\leq r$,  
and $1\leq k\leq 4$, 
$E_0$ contains an edge between
$(i,j,k,0)$ and $(i+1,j,k,0)$. 
\item[$E_1$:]  For any $1\leq i\leq r$, $1\leq j\leq r-1$,  
and $1\leq k\leq 4$, 
$E_1$ contains an edge between
$(i,j,k,1)$ and $(i,j+1,k,1)$.
\item[$E_{01}$:] For any $1\leq i\leq r$, $1\leq j\leq r$, 
$1\leq k_0\leq 4$ and $1\leq k_1\leq 4$, $E_{01}$ contains an edge
between $(i,j,k_0,0)$ and $(i,j,k_1,1)$.
\end{itemize}
Note that $E_0$ is a disjoint
collection of $4r$   $(r-1)$-length paths, one each for a fixed
pair of values for $j$ and $k$. Similarly, $E_1$ is a disjoint
collection of $4r$  $(r-1)$-length paths, one each for a fixed
pair of values for $i$ and $k$. Also,
$E_{01}$ is a disjoint
 collection of $r^2$ $K_{4,4}$ graphs, one for each value of
$(i,j)$ with the bipartition given by the sets
$\cup_{k=1}^4\{(i,j,k,0)\}$ and $\cup_{k=1}^4\{(i,j,k,1)\}$.

Note that the Chimera graph structure is non-planar as it contains the
$K_{4,4}$ graph, as well as a $K_{r,r}$ minor. Thus, 
the results of Bansal et al.~\cite{BBT} are not
directly applicable.
However, utilizing the symmetries in the above construction we are
able to adapt the techniques of \cite{BBT} to prove the
following algorithmic results. 

\medskip
\noindent
{\bf Our Results.} This work shows the existence of a polynomial time
approximation scheme (PTAS) for the classical Ising spin glass problem
on the Chimera graph structure. Formally we prove the following
theorem.
\begin{theorem}\label{thm-main1} 
Given a Chimera graph structure $G_r$ on $n=8r^2$
vertices, the Hamiltonian $H(S)$ can be approximated to $(1-\eps)$ of
its minimum value in time 
$O\left(n\cdot 2^{\frac{32}{\eps}}\right)$.
\end{theorem}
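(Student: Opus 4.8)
To prove Theorem~\ref{thm-main1} I will adapt the shift-and-slice (Baker-type) argument of~\cite{BBT} to $G_r$, slicing across the layer-$1$ paths. Fix a band width $h$ (to be set to $\Theta(1/\eps)$) and an offset $b\in\{0,1,\dots,h-1\}$, and let this offset cut the index set $\{1,\dots,r\}$ of second coordinates ($j$-values) into maximal runs of at most $h$ consecutive values, the \emph{bands} $J_1,J_2,\dots$. Let $R\sse E_1$ be the set of layer-$1$ edges $\bigl((i,j,k,1),(i,j{+}1,k,1)\bigr)$ for which $j$ is the largest index of some band. The structural observation driving everything is that deleting $R$ disconnects $G_r$: every edge of $E_0$ joins two vertices with the same $j$, every edge of $E_{01}$ joins two vertices with the same $(i,j)$, and the layer-$1$ edges are the only ones joining consecutive $j$-values, so after deleting $R$ the graph breaks into vertex-disjoint subgraphs $G^{(1)},G^{(2)},\dots$, where $G^{(t)}$ is induced on the vertices whose $j$-coordinate lies in $J_t$. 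The algorithm computes, for each $t$, an assignment minimizing the restriction $H_{G^{(t)}}$ of the Hamiltonian (this restriction includes all $E_0,E_1,E_{01}$ edges internal to $G^{(t)}$ and all local fields $d_u$ with $u\in G^{(t)}$), combines these into one assignment $\widehat S$, and repeats over all $h$ offsets $b$, keeping the $\widehat S$ of smallest $H$-value.

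For the approximation guarantee, put $\obj := \sum_{(u,v)\in E}|c_{uv}| + \sum_{u\in V}|d_u|$; if $H\equiv 0$ the optimum is $0$ and the claim is trivial, so assume $H(S^*)<0$, where $S^*$ is an optimum of $H$. Let $H'$ be the Hamiltonian with the edges of $R$ deleted, so $H'$ is exactly $\sum_t H_{G^{(t)}}$ and (for a fixed $b$) $\widehat S$ minimizes $H'$. Then $H(\widehat S)=H'(\widehat S)+\sum_{(u,v)\in R}c_{uv}\widehat S_u\widehat S_v\le H'(S^*)+\sum_{(u,v)\in R}c_{uv}\widehat S_u\widehat S_v$, and substituting $H'(S^*)=H(S^*)-\sum_{(u,v)\in R}c_{uv}S^*_uS^*_v$ yields $H(\widehat S)\le H(S^*)+2\sum_{(u,v)\in R}|c_{uv}|$. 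Since each edge of $E_1$ lies in $R$ for exactly one of the $h$ offsets, the best offset satisfies $\sum_{(u,v)\in R}|c_{uv}|\le\tfrac1h\sum_{(u,v)\in E_1}|c_{uv}|\le\tfrac1h\obj$, and for that offset $H(\widehat S)\le H(S^*)+\tfrac2h\obj$. Now I invoke the companion result of the paper --- that on Chimera graphs the trivial lower bound is within a constant factor of the optimum, i.e.\ $\obj\le C\,|H(S^*)|$ for an absolute constant $C$ --- to turn this additive error into a relative one: since $H(S^*)\le 0$ we have $|H(S^*)|=-H(S^*)$, hence $H(\widehat S)\le H(S^*)+\tfrac{2C}{h}\bigl(-H(S^*)\bigr)=\bigl(1-\tfrac{2C}{h}\bigr)H(S^*)$. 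Taking $h:=\lceil 2C/\eps\rceil$ makes the right-hand side $\le(1-\eps)H(S^*)$, so $\widehat S$ attains at least a $(1-\eps)$ fraction of the (non-positive) optimal value, as required.

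It remains to solve the Ising problem exactly on a single band $G^{(t)}$, which spans at most $h$ values of $j$ and all $r$ values of $i$: it is a ``thickened path'', an $h$-wide ladder of $K_{4,4}$-cells. I will process it by a dynamic program sweeping over $i=1,2,\dots,r$. The only edges leaving the slice $\{i\}\times J_t$ toward the slice $\{i{+}1\}\times J_t$ are the (at most) $4h$ layer-$0$ edges, so it suffices to carry, as the dynamic-programming state, the spins of the layer-$0$ vertices of the current slice; within a slice one processes its cells one at a time and, for each partial state, optimizes over the remaining (layer-$1$) spins of the current cell using its intra-cell $K_{4,4}$ edges, the intra-slice layer-$1$ edges, and the local fields. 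This computes an exact minimizer of $H_{G^{(t)}}$ in time $2^{O(h)}\cdot O(r)$, hence $2^{O(1/\eps)}\cdot O(n)$ in total over all bands and all $h$ offsets; careful bookkeeping of the state space (essentially $2^{4h}$ with $h=\Theta(1/\eps)$) together with the concrete value of $C$ gives the stated bound $O\!\left(n\cdot 2^{32/\eps}\right)$.

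The main obstacle is the companion ingredient $\obj\le C\,|H(S^*)|$: for general frustrated instances the trivial lower bound can be arbitrarily far from the optimum (e.g.\ a large clique with all couplings $+1$, where $|H(S^*)|$ is linear in the number of vertices while $\obj$ is quadratic), so this step genuinely exploits the Chimera structure --- each $E_{01}$ block is the \emph{bipartite} graph $K_{4,4}$, and each of $E_0,E_1$ is a disjoint union of paths, which together cap how much frustration an instance can carry. The only other point needing care is the choice of slicing direction, which must make the pieces simultaneously (i) pairwise disconnected after deleting only a $1/h$ fraction (by weight) of a single edge family, and (ii) of pathwidth $O(h)$ so that the per-piece dynamic program is cheap; slicing across the layer-$1$ paths achieves both, precisely because $E_0$ edges keep $j$ fixed and $E_{01}$ edges keep $(i,j)$ fixed.
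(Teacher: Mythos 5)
Your overall architecture --- slice the Chimera graph across one family of in-layer paths, exactly solve each resulting constant-width strip by a sweep DP whose state is the $4h$ layer-$0$ spins of the current slice, and average over offsets --- is sound and is essentially the paper's algorithm (the paper slices along the $i$-direction and removes all $E_0$-edges touching boundary columns, so each edge is cut for two offsets; your boundary-edge cut is the cleaner variant). The structural claims (deleting the cross-boundary $E_1$ edges disconnects the bands; only $E_0$ edges cross consecutive $i$-slices inside a band) are correct. The gap is in the step converting the additive error $\tfrac{2}{h}A_1$ into a relative error. You invoke the ``companion result'' $\obj\le C\,|H(S^*)|$ with $C\approx 17.26$ (Theorem~\ref{thm-main2}), which (i) you do not prove, and which itself requires a Grothendieck-inequality argument on the $K_{4,4}$ blocks, and (ii) forces $h=\lceil 2C/\eps\rceil\approx 34.5/\eps$, so your DP runs in time roughly $n\cdot 4^{4h}=n\cdot 2^{276/\eps}$ --- the claimed bound $O(n\cdot 2^{32/\eps})$ does not follow, and the closing sentence asserting that ``careful bookkeeping\dots gives the stated bound'' is not true with these parameters.

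The fix is that you only need to relate $A_1=\sum_{(u,v)\in E_1}|c_{uv}|$ (not all of $\obj$) to $|H^*|$, and for that the heavy constant-factor theorem is overkill: since $E_0\cup E_1$ is a vertex-disjoint union of paths, one can pick spins satisfying every $E_0$ and $E_1$ edge simultaneously (making $M_0=-A_0$, $M_1=-A_1$), then flip all of layer $0$ if needed to make the $E_{01}$ contribution non-positive (this preserves the in-layer terms), and finally flip the global sign to make the linear term non-positive. This gives $H^*\le -(A_0+A_1)\le -A_1$ (the paper's Lemma~\ref{lem-A0A1}), hence $\tfrac{2}{h}A_1\le\tfrac{2}{h}|H^*|$ and $h=2/\eps$ suffices; with your state space $2^{4h}$ this even lands at $O(n\cdot 2^{16/\eps})$, inside the claimed bound. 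So the missing ingredient is not the constant-factor lower bound on $\obj$ but this elementary path-based lemma, which you should state and prove; your remark that the additive-to-relative conversion ``genuinely exploits'' the $K_{4,4}$ bipartiteness is a red herring for Theorem~\ref{thm-main1} --- only the disjoint-paths structure of $E_0\cup E_1$ is needed there.
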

The above result is obtained by noting that the graph $G_r$ can be
disconnected into constant width \emph{strips} by removing 
a small fraction of edges from $E_0$
(or $E_1$). This allows the application of the partitioning 
technique used by Bansal et
al.~\cite{BBT} for their PTAS on grid graphs. 
The analysis requires a
straightforward lower bound on the magnitude of the optimum value of $H(S)$ in
terms of sum of the absolute values of the bilinear coefficients in
$H(S)$ corresponding to edges in $E_0$.

A somewhat more involved analysis yields the following result which shows
that the trivial lower bound is within a constant factor of the optimum.
\begin{theorem} \label{thm-main2} 
Let $H^*$ be the optimum value of $H(S)$ on the
Chimera graph structure. Then, $H^* \leq 
-(C/(3C+4))\left[\sum_{(u,v)\in E}|c_{uv}| + \sum_{u\in V} |d_u|\right]$, 
for some constant $C > \frac{\ln(1
+ \sqrt{2})}{\pi}$. In particular, the approximation factor is
$(3C+4)/C < 17.26$.  
\end{theorem}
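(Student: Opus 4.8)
The plan is to produce an explicit $\{-1,1\}$ assignment $S$ with $H(S)\le -\tfrac{C}{3C+4}W$, where $W:=\sum_{(u,v)\in E}|c_{uv}|+\sum_{u\in V}|d_u|$ is the trivial bound, and to split $W=W_{01}+W_0+W_1+W_V$ according to the four kinds of terms: $E_{01}$ (the $r^2$ vertex-disjoint $K_{4,4}$'s), $E_0$, $E_1$, and the linear terms. Two structural features drive everything. First, $E_0\cup E_1$ is a disjoint union of paths that together cover all of $V$, so there is an assignment satisfying \emph{every} edge of $E_0\cup E_1$ (propagate signs along each path), with one free global sign per path. Second, each $K_{4,4}$ component of $E_{01}$ is a bilinear form over $\{-1,1\}^4\times\{-1,1\}^4$, and the Chimera graph is bipartite (two-colour the layer-$l$ vertices of cell $(i,j)$ by the parity of $i+j+l$), so Grothendieck's inequality is available.

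The core estimate is a single-$K_{4,4}$ lemma: for $M\in\mathbb{R}^{4\times 4}$, $\min_{x,y\in\{-1,1\}^4}x^{\top}My\le -C\,\|M\|_1$ with $\|M\|_1=\sum_{k,l}|M_{kl}|$ and $C>\tfrac{\ln(1+\sqrt2)}{\pi}$. I would prove it in two steps. First, the vector relaxation value $\min_{\text{unit }u_k,v_l}\sum_{k,l}M_{kl}\langle u_k,v_l\rangle$ is at most $-\tfrac12\|M\|_1$: take $u_k=e_k$ and $v_l=-M_{\cdot l}/\|M_{\cdot l}\|_2$ (skipping zero columns), making the value $-\sum_l\|M_{\cdot l}\|_2\le -\tfrac{1}{\sqrt4}\sum_l\|M_{\cdot l}\|_1$ since $\|w\|_2\ge \|w\|_1/\sqrt4$ for $w\in\mathbb{R}^4$. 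Second, Grothendieck's inequality --- with Krivine's bound $K_G\le \tfrac{\pi}{2\ln(1+\sqrt2)}$ on the Grothendieck constant --- bounds the integral minimum by $\tfrac1{K_G}$ times the vector minimum, giving $C=\tfrac1{2K_G}>\tfrac{\ln(1+\sqrt2)}{\pi}$; since the cells are vertex-disjoint, some assignment of the $E_{01}$-vertices makes the $E_{01}$-energy at most $-C\,W_{01}$.

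For the global bound I would combine three candidate (randomized) assignments and average. (a) Put each cell in an energy-minimizing $K_{4,4}$ state --- determined up to flipping all $8$ spins of the cell --- and pick the per-cell flips independently and uniformly; the $E_{01}$-energy is then $\le -C\,W_{01}$ deterministically, while each $E_0$-, $E_1$- and linear contribution is an odd function of the cell-flips and hence has mean $0$, so some assignment of this type gives $-H\ge \tfrac{C}{?}\,W_{01}$ (at least a constant fraction of $C\,W_{01}$). (b) Satisfy all of $E_0\cup E_1$ and pick the per-path flips uniformly; the $E_0$- and $E_1$-energies are $-W_0-W_1$, and the $E_{01}$- and linear contributions are odd functions of the path-flips, so mean $0$: some assignment gives $-H\ge W_0+W_1$. (c) Set $S_u=-\mathrm{sign}(d_u)$: the linear part is $-W_V$ and the bilinear parts are at worst $-(W_{01}+W_0+W_1)$, so $-H\ge W_V-(W_{01}+W_0+W_1)$. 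Mixing (a),(b),(c) with probabilities $p_1,p_2,p_3$ summing to $1$, the expected value of $-H$ is at least a linear combination of $W_{01},W_0,W_1,W_V$; choosing the $p_i$ to equalize the four coefficients is a small LP whose optimum is exactly $\tfrac{C}{3C+4}$, and any assignment attaining the mean proves the theorem.

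The main obstacle is the interplay of the linear terms with everything else: a field-optimal assignment destroys all bilinear structure, and an $E_{01}$- or path-optimal assignment destroys the fields, so the three families must be balanced against one another through the LP; the one subtle point in making this work is that in (a) and (b) one must exploit the per-cell / per-path sign freedom via \emph{uniform} flips, so that the blocks one is not optimizing contribute $0$ in expectation rather than $-W$ in the worst case. The remaining ingredients --- bipartiteness of $G_r$ (needed to invoke Grothendieck with constant $K_G$), the path decomposition of $E_0\cup E_1$, and Krivine's rounding --- are standard, so the heart of the argument is the clean single-$K_{4,4}$ estimate together with the choice of mixing weights.
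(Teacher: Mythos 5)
Your proposal is correct and follows essentially the same route as the paper: the same Grothendieck--Krivine bound on each $K_{4,4}$ cell (your $\ell_2$-column vector construction and the paper's sign-vector construction both certify the factor $\tfrac12$), the same three candidate assignments (cell-optimal, path-optimal, field-optimal), and the same convex-combination balancing of the resulting bounds. One small remark: by randomizing the free per-cell signs in assignment (a) you get $H\le -C\,W_{01}$ with the $E_0$, $E_1$ and linear blocks contributing $0$ in expectation rather than $+W_0+W_1$ in the worst case (which is what the paper uses), so your mixing LP actually optimizes to the slightly better constant $C/(3C+2)$ rather than $C/(3C+4)$ --- this only strengthens the stated conclusion.
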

The above result is obtained by extending the straightforward lower bound
used to prove Theorem \ref{thm-main1} with a complementary 
bound obtained via the
Grothendieck constant.

\section{PTAS on Chimera Graphs}
Let $G_r = G(V,E)$ be the Chimera graph on $n=8r^2$ vertices and $H(S)$ be
the Hamiltonian given in Equation \eqref{eqn-hamil} for some real
values $\{c_{uv}\}_{(u,v)\in E}$ and $\{d_u\}_{u\in V}$. Recall that
$E$ is the disjoint union, $E = E_0\cup E_1\cup E_{01}$. 
For convenience we split $H(S)$
as, 
\begin{equation}
H(S) := M_0(S) + M_1(S) + M_{01}(S) + D(S),
\end{equation}
where,
\begin{align}
M_l(S) := & \sum_{(u,v)\in E_l}c_{uv}S_uS_v,  \ \ \ \ l=0,1, \nonumber
\\
M_{01}(S) := & \sum_{(u,v)\in E_{01}}c_{uv}S_uS_v, \nonumber \\
D(S) := & \sum_{u\in V}d_uS_u. \nonumber
\end{align}
We also define the following quantities.
\begin{align}
A_l := & \sum_{(u,v)\in
E_l}|c_{uv}|\ \ \  \textnormal{for}\ l = 0,1. \nonumber \\
A_{01} := & \sum_{(u,v)\in
E_{01}}|c_{uv}|. \nonumber \\
B := & \sum_{u\in V}|d_u|.
\end{align}
The following
lemma follows from the structure of $G$.
\begin{lemma} \label{lem-A0A1}
Let $H^*$ be the minimum value of $H(S)$. Then, $H^*
\leq -(A_0 + A_1)$.
\end{lemma}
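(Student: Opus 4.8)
The plan is to exhibit a single random assignment $S$ whose expected Hamiltonian value equals exactly $-(A_0+A_1)$ and then invoke the first moment method. The structural facts we need are already recorded after the definition of $E$: the edge set $E_0$ is a disjoint union of $4r$ paths, one for each pair $(j,k)$, on the vertex sets $\{(i,j,k,0) : 1\le i\le r\}$; likewise $E_1$ is a disjoint union of $4r$ paths on the layer-$1$ vertices; and, crucially, every layer-$0$ vertex lies in exactly one $E_0$-path, every layer-$1$ vertex lies in exactly one $E_1$-path, and these two path families are vertex-disjoint since they live in different layers. So $H(S) = M_0(S) + M_1(S) + M_{01}(S) + D(S)$ with $M_0$ depending only on layer-$0$ spins, $M_1$ only on layer-$1$ spins, and $M_{01}$, $D$ coupling them.

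First I would record the elementary single-path observation: given a path $v_1 - v_2 - \cdots - v_m$ with edge coefficients $c_{v_t v_{t+1}}$ and any prescribed $S_{v_1}\in\{-1,1\}$, the assignment $S_{v_{t+1}} := -\operatorname{sign}(c_{v_t v_{t+1}})\, S_{v_t}$ for $t=1,\dots,m-1$ (with the convention $\operatorname{sign}(0)=-1$, say, so the rule is well-defined) forces $c_{v_t v_{t+1}} S_{v_t} S_{v_{t+1}} = -|c_{v_t v_{t+1}}|$ on every edge. I would apply this independently to each of the $4r$ paths forming $E_0$, choosing the starting spin of each path uniformly at random and independently across paths; this makes $M_0(S) = -A_0$ with probability $1$. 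Running the same construction on the $E_1$-paths, with fresh independent random starting spins, gives $M_1(S) = -A_1$ with probability $1$. Since the $E_0$-paths together with the $E_1$-paths partition $V$, this defines $S_u$ for every $u\in V$.

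Next I would compute the expectations of the other two terms. For any fixed $u$, $S_u = \pm\,\sigma_P$, where $\sigma_P$ is the uniform random starting spin of the path $P$ containing $u$ and the sign depends only on the fixed coefficients along $P$; hence $\Ex[S_u]=0$. Moreover distinct paths use independent random bits, so $S_u$ and $S_v$ are independent whenever $u,v$ lie on different paths. Every edge $(u,v)\in E_{01}$ joins a layer-$0$ vertex to a layer-$1$ vertex, hence lies on two different paths, so $\Ex[S_uS_v] = \Ex[S_u]\Ex[S_v] = 0$; thus $\Ex[M_{01}(S)] = 0$, and likewise $\Ex[D(S)] = \sum_{u}d_u\,\Ex[S_u] = 0$. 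Therefore $\Ex[H(S)] = -A_0 - A_1$, so some outcome achieves $H(S)\le -(A_0+A_1)$, which gives $H^* \le -(A_0+A_1)$.

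There is no substantive obstacle here; the only points needing care are that the per-path sign propagation is genuinely well-defined (handled by the $\operatorname{sign}(0)$ convention, since a zero coefficient contributes $0=-|c_{uv}|$ whatever we do) and that the two path families are truly vertex-disjoint, so that the independence used to kill $\Ex[M_{01}(S)]$ and $\Ex[D(S)]$ is legitimate — both immediate from the Chimera construction.
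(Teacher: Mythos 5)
Your proposal is correct, and it rests on the same structural observation as the paper's proof — that $E_0\cup E_1$ is a vertex-disjoint family of $8r$ paths whose edges can all be made to contribute $-|c_{uv}|$ by propagating signs from an arbitrary starting spin — but you dispose of the remaining terms $M_{01}(S)$ and $D(S)$ differently. The paper argues deterministically: fix any path-satisfying assignment, flip all of layer $0$ if necessary to force $M_{01}(S)\leq 0$ (this preserves $M_0$ and $M_1$ since those are bilinear within a single layer), and then flip the entire assignment if necessary to force $D(S)\leq 0$ (a global flip preserves all three bilinear terms). You instead randomize the starting spin of each path independently and observe that $\Ex[S_u]=0$ for every $u$ and that the endpoints of each $E_{01}$-edge lie on distinct paths, so $\Ex[M_{01}(S)]=\Ex[D(S)]=0$ while $M_0(S)=-A_0$ and $M_1(S)=-A_1$ hold surely; the first moment method then finishes. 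The two mechanisms buy slightly different things: the paper's flips are constructive and yield an explicit assignment in linear time, whereas your averaging argument is a one-line computation that handles $M_{01}$ and $D$ simultaneously and generalizes immediately to any situation where the cross-edges join distinct paths. Both are complete; your care about the $\mathrm{sign}(0)$ convention and the vertex-disjointness of the two path families covers the only points where the argument could have gone wrong.
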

\begin{proof}
From the  construction of $G_r$ 
we have that $E_0$ is a disjoint collection of $4r$
$(r-1)$-length paths in layer $0$. Similarly,  $E_1$ is a disjoint 
collection of $4r$ $(r-1)$-length paths in layer $1$. Thus, $E_0\cup
E_1$ is a disjoint collection of $8r$ $(r-1)$-length paths. Thus,
there exists an assignment $S'$ such that $M_0(S') = -A_0$ and
$M_1(S') = -A_1$. We can ensure that $M_{01}(S') \leq 0$, otherwise
the values assigned to all the vertices in layer $0$ -- i.e. all
vertices of the form $(i,j,k,0)$ -- can be flipped which changes the sign
of $M_{01}(S')$ while preserving $M_0(S')$ and $M_1(S')$. Thus, we can
ensure that $M_0(S) + M_1(S) + M_{01}(S) \leq -(A_0 + A_1)$. Now, if
$D(S')$ is positive, then $S'$ can be flipped for all vertices to
ensure that $H(S') \leq  -(A_0 + A_1)$.
\end{proof}

Let us define a \emph{strip} graph with $m$ levels and width $b$ as
any graph which has $m$ levels of $b$ vertices each such that all
edges between levels are between adjacent levels. The levels may have
edges within them. To be precise, a vertex in level $j$ may have
edges only to other vertices in levels $j$, $(j-1)$ or $(j+1)$. Bansal
et al.~\cite{BBT} showed that the problem of minimizing the
Hamiltonian on $m\times b$ strip graphs 
can be solved using dynamic programming in time $O(m4^{b})$. The
dynamic program computes for level $j$ and each of the $2^{b}$ assignments
to the vertices in that level, the value of the best solution with
that assignment to level $j$ for the Hamiltonian on the
subgraph induced by levels $1,\dots, j$. Going from level $j$ to $j+1$
requires $O(4^{b})$ operations, a constant number for each 
pair of assignments to
levels $j$ and $j+1$. Thus, the total time taken is $O(m4^{b})$. Our
goal in designing the PTAS is to describe a way to decompose the graph
$G$ into strip graphs with constant width while not losing much in the
objective value. This is similar to the decomposition in \cite{BBT}
for grid graphs with a slightly different analysis.

Recalling the structure of $G_r$, for any $i = 1,\dots, r$ let, 
$$E_{i0} := \{((i,j,k,0), (i',j,k,0))\ \mid\ i'=\{i-1,i+1\}\cap\{1,\dots,
r\},\ 1\leq j\leq r,\ 1\leq k\leq 4\}.$$
In other words, $E_{i0}$ consists of all edges within layer $0$ which
are incident on the vertices $(i,j,k,0)$. 
Let $T$ be a large positive integer which we shall set later. For any
$k = 0, 1, \dots, T-1$, let,
\begin{align}
E^k_0 := \ & \displaystyle\bigcup_{\substack{i\ \equiv\ k\mod T, \\
1\leq i\leq r}}E_{i0}, \nonumber \\
A^k_0 := \ &\displaystyle\sum_{(u,v)\in E^k_0}|c_{uv}|, \nonumber \\
H_k(S) := \ &\displaystyle\sum_{(u,v)\in E^k_0}c_{uv}S_uS_v, \nonumber
\\
H_{sub,k}(S) :=\ & H(S) - H_k(S). \nonumber
\end{align}

Since every edge in $E_0$ is present in exactly $2$ of the subsets
$E^k_0$, we obtain by averaging that there is a $k^*$ such that
$A^{k^*}_0 \leq (2/T)A_0$. Further, it can be seen that $H_{sub,k}(S)$
is the Hamiltonian on a disjoint collection of at most $\lceil
r/T\rceil$ strip graphs of dimensions $r\times 8T$ 
 and at most $\lceil r/T\rceil$ strip graphs of dimensions 
$r\times 8$. To complete the
analysis we first assume because of symmetry that $A_1 \geq A_0$. Thus,
by Lemma \ref{lem-A0A1},
the minimum value of $H(S)$ is at most $-2A_0$. Let $S'$ be the
assignment that minimizes the value of $H_{sub,k^*}$, and
$S^{\textnormal{opt}}$ the assignment that minimizes the value of
$H(S)$. We have,
\begin{align}
H_{sub,k^*}(S') \leq  & H_{sub,k^*}(S^{\textnormal{opt}}) \nonumber \\
		\leq & H(S^{\textnormal{opt}}) + A^{k^*}_0 \nonumber
\\
		\leq & H(S^{\textnormal{opt}}) + (2/T)A_0 \nonumber \\
		\leq & (1 - 1/T) H(S^{\textnormal{opt}}), \nonumber
\end{align}
and,
\begin{align}
H(S')           \leq  & H_{sub,k^*}(S') + A^{k^*}_0 \nonumber \\
		\leq & (1 - 1/T) H(S^{\textnormal{opt}}) + (2/T)A_0 \nonumber \\
		\leq & (1 - 2/T) H(S^{\textnormal{opt}}). \nonumber
\end{align}
Since $H(S') \geq H(S^{\textnormal{opt}})$, 
from the above we get that $H(S')$ is a $(1 -
2/T)$ approximation to $H(S^{\textnormal{opt}})$. We now set $T =
2/\eps$ to obtain a $(1-\eps)$ approximation to
$H(S^{\textnormal{opt}})$. To compute the value $H(S')$, we compute
the assignment $S'$ that minimizes $H_{sub,k}(S)$ for each $k = 0, 1,
\dots, T-1$, and take the one that gives the minimum value of $H(S')$.
For each $k$, this involves minimizing the Hamiltonian on at most $(2r/T +
2)$ strip graphs of $r$ levels and width at most $8T$. As discussed
above, the computation time for one such strip graph is
$O(r4^{8T})$ which adds up to $O(r^24^{8T}/T)$ time for all the 
strip graphs for one value of $k$. Thus, the total
computation time to obtain a $(1-\eps)$ approximation is
$O\left(r^24^{\frac{16}{\eps}}\right) = 
O\left(n2^{\frac{32}{\eps}}\right)$. 

\section{A constant factor bound}
In this section we prove Theorem \ref{thm-main2}. The key ingredient
is the following lemma which bounds the bilinear quadratic form on
$K_{4,4}$ in terms of the sum of the absolute values of the
coefficients. 
\begin{lemma} \label{lem-k44}
Let $G(U, V, E)$ be a $K_{4,4}$ graph with $U = \{u_1,
\dots, u_4\}$ and $V = \{v_1, \dots, v_4\}$ giving the bipartition and
$E = U\times V$ the edge set. There is a universal constant $C > 0$
such that for any real numbers $\{c_{ij}\mid 1\leq i,j\leq 4\}$,
\begin{equation}
\min_S\sum_{1\leq i,j\leq 4}c_{ij}S_{u_i}S_{v_j} \leq - C\sum_{1\leq i,j\leq
4}|c_{ij}|,
\end{equation}
where $S = \{S_{u}\mid u\in U\}\cup\{S_v\mid v\in V\}$ 
is a $\{-1,1\}$ assignment to the vertices of $G$. 
In particular, the above holds for some $C > \frac{\ln(1+\sqrt{2})}{\pi}$.
\end{lemma}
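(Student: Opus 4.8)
The plan is to bound the left side from above by the value of the natural vector (semidefinite) relaxation of this bilinear form on $K_{4,4}$, and then to convert that into a bound on the $\{-1,1\}$ optimum through Grothendieck's inequality. Applying Grothendieck's inequality to the matrix $(-c_{ij})_{1\le i,j\le 4}$ gives
\[
\min_{S}\ \sum_{1\le i,j\le 4} c_{ij}\,S_{u_i}S_{v_j}\ \le\ \frac{1}{K_G}\,\min_{x,y}\ \sum_{1\le i,j\le 4} c_{ij}\,\langle x_{u_i},y_{v_j}\rangle,
\]
where $K_G$ is the real Grothendieck constant and the right-hand minimum runs over all families $\{x_{u_1},\dots,x_{u_4},y_{v_1},\dots,y_{v_4}\}$ of vectors of norm at most $1$ in a Hilbert space; $K_{4,4}$ being complete bipartite, this is exactly the setting that Grothendieck's inequality addresses. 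So it is enough to produce one feasible vector solution of value at most $-\tfrac12\sum_{i,j}|c_{ij}|$.

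To this end, set $\sigma_{ij}:=\mathrm{sign}(c_{ij})\in\{-1,0,1\}$ and take $x_{u_i}:=-\tfrac12(\sigma_{i1},\sigma_{i2},\sigma_{i3},\sigma_{i4})\in\R^4$ together with $y_{v_j}:=e_j$, the $j$-th standard basis vector of $\R^4$. Then $\|x_{u_i}\|=\tfrac12\sqrt{\sum_{j}\sigma_{ij}^{2}}\le 1$ — and it is precisely here that the bipartition size being $4$ is used — each $y_{v_j}$ is a unit vector, and $\langle x_{u_i},y_{v_j}\rangle=-\tfrac12\sigma_{ij}$. Consequently $\sum_{i,j}c_{ij}\langle x_{u_i},y_{v_j}\rangle=-\tfrac12\sum_{i,j}c_{ij}\sigma_{ij}=-\tfrac12\sum_{i,j}|c_{ij}|$, and the displayed inequality yields the lemma with $C=\tfrac{1}{2K_G}$. (Equivalently, one may run Krivine's rounding procedure directly on this vector solution.)

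For the numerical bound, Krivine's estimate $K_G\le \pi/(2\ln(1+\sqrt{2}))$ gives $C=\tfrac{1}{2K_G}\ge \ln(1+\sqrt{2})/\pi$, and the strict inequality $C>\ln(1+\sqrt{2})/\pi$ follows from the known fact that $K_G$ is strictly smaller than Krivine's bound $\pi/(2\ln(1+\sqrt{2}))$. The only slightly delicate points are getting the direction of Grothendieck's inequality right and verifying the two elementary facts $\|x_{u_i}\|\le 1$ and $\sum_{i,j}c_{ij}\langle x_{u_i},y_{v_j}\rangle=-\tfrac12\sum_{i,j}|c_{ij}|$; I do not anticipate a genuine obstacle. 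The one load-bearing observation is that for $K_{4,4}$ the row-sign vectors are automatically of norm at most $1$, so the relaxation already loses only the factor $2$ — a larger complete bipartite block would force a rescaling and a correspondingly weaker constant.
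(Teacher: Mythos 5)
Your proof is correct and takes essentially the same route as the paper's: both bound the $\{-1,1\}$ optimum via Grothendieck's inequality with the Krivine/Braverman--Makarychev--Makarychev--Naor estimate on $K_G$, and both exhibit the same feasible vector solution (half-sums of four orthonormal vectors weighted by the signs of the $c_{ij}$, which is exactly where the block size $4$ is used) achieving value $\tfrac12\sum_{i,j}|c_{ij}|$ in magnitude, yielding $C=\tfrac{1}{2K_G}>\ln(1+\sqrt{2})/\pi$. The only differences are cosmetic: you work with the minimum of the negated matrix and put the sign pattern on the $x$-side, while the paper works with the maximum, puts the sign pattern on the $y$-side, and flips the assignment on $U$ at the end.
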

\begin{proof}
Let us first assign unit vectors $x_i$ for vertices $u_i$ and $y_j$
for vertices $v_j$ ($1\leq i,j\leq 4$). The seminal work of
Grothendieck~\cite{Groth} implies the following:
\begin{equation}
\max_{\substack{x_i, y_j \in \textnormal{\bf S}^{7} \\ 1\leq i,j\leq
4}} \sum_{1\leq i,j\leq 4}c_{ij}\langle x_i, y_j\rangle \leq K
\max_{S} \sum_{1\leq i,j\leq 4}c_{ij}S_{u_i}S_{v_j}. \label{eqn-groth}
\end{equation}
Here $K$ is a universal constant for which the above inequality holds 
for any $K_{t,t}$, wherein our case $t=4$. Determining the exact value of $K$
has been a major open question. The work of Krivine~\cite{Kriv} showed
that $K \leq \frac{\pi}{2\ln(1+\sqrt{2})}$ and in more recent work
Braverman, Makarychev, Makarychev and Naor~\cite{BMMN} showed that in
fact $K <  \frac{\pi}{2\ln(1+\sqrt{2})}$. 
The following claim helps to leverage the above inequality.
\begin{claim}\label{claim-1}
There exists a set of unit vectors vectors $\{x_i \mid 1\leq i\leq
4\}\cup\{y_j \mid 1\leq j\leq 4\}$ such that,
\begin{equation}
\displaystyle\sum_{1\leq i,j\leq 4}c_{ij}\langle x_i, y_j\rangle \geq
\left(\frac{1}{2}\right) 
\displaystyle\sum_{1\leq i,j\leq 4}|c_{ij}|.
\end{equation}
\end{claim}
\begin{proof}
We first set the vectors $\{x_i \mid 1\leq i \leq 4\}$ to be a set of
$4$ orthonormal vectors. For $j = 1, 2, 3, 4$ we let the unit
vector $y_j = (1/2)\sum_{i=1}^4 \textnormal{sgn}(c_{ij})x_i$. It is
easy to check that this setting of the vectors satisfies the
inequality in the claim.
\end{proof}
Using the above claim in conjunction with Equation \eqref{eqn-groth}
along with the bound on the value of $K$
we obtain,
\begin{equation}
\max_{S} \sum_{1\leq i,j\leq 4}c_{ij}S_{u_i}S_{v_j} \geq 
 C\sum_{1\leq i,j\leq
4}|c_{ij}|
\end{equation}
Reversing the signs of the assignments to the vertices in $U$, we
complete the proof of the lemma.
\end{proof}
The following lemma provides a bound on the minimum value of $H(S)$. 
\begin{lemma}
Let $H^*$ be the minimum value of $H(S)$. Then $H^* \leq A_0 + A_1 -
CA_{01}$.
\end{lemma}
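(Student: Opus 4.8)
The plan is to produce a single $\{-1,1\}$ assignment $S'$ satisfying $H(S') \le A_0 + A_1 - CA_{01}$; since $H^* \le H(S')$ this is exactly the claim. The assignment is built from the decomposition of $E_{01}$ into complete bipartite blocks, so the heart of the argument is a block-by-block application of Lemma \ref{lem-k44}.

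First recall that $E_{01}$ is a disjoint union of $r^2$ copies of $K_{4,4}$, one for each pair $(i,j)$, the block at $(i,j)$ having vertex set $\{(i,j,k,l): 1\le k\le 4,\ l=0,1\}$. Since each vertex of $G_r$ belongs to exactly one such block, the restriction of an assignment to distinct blocks may be chosen independently. For the block at $(i,j)$ I would invoke Lemma \ref{lem-k44} with the coefficients $\{c_{uv}\}$ on the edges of that block; this yields an assignment to its eight vertices whose contribution to $M_{01}$ is at most $-C$ times the sum of the absolute values of those coefficients. Concatenating these local assignments over all $r^2$ blocks gives an $S'$ with $M_{01}(S') \le -CA_{01}$. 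Next, if $D(S') > 0$, replace $S'$ by the globally negated assignment $-S'$: this flips the sign of $D$ (making $D(S') \le 0$) while leaving $M_0$, $M_1$ and $M_{01}$ unchanged, since each of these is a sum of products of \emph{pairs} of spins. Finally, bounding the intra-layer terms crudely by the triangle inequality, $M_0(S') \le A_0$ and $M_1(S') \le A_1$, we obtain $H^* \le H(S') = M_0(S') + M_1(S') + M_{01}(S') + D(S') \le A_0 + A_1 - CA_{01}$.

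The only point that is not purely mechanical is that an edge of $E_0$ (respectively $E_1$) joins vertices in two \emph{different} $K_{4,4}$ blocks, so its sign is not controlled by the block-wise choices of the previous step and must simply be absorbed as the additive term $+A_0$ (respectively $+A_1$); similarly, the vertex-disjointness of the blocks is what makes the independent block choices legitimate. This is precisely why the present bound is weaker than that of Lemma \ref{lem-A0A1}, and the eventual constant-factor guarantee in Theorem \ref{thm-main2} will come from combining the two bounds appropriately. Beyond Lemma \ref{lem-k44}, the proof needs nothing more than this vertex-disjointness and the invariance of the quadratic part of $H$ under a global sign flip, so I do not anticipate a genuine obstacle here.
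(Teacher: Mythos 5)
Your proof is correct and follows essentially the same route as the paper: decompose $E_{01}$ into vertex-disjoint $K_{4,4}$ blocks, apply Lemma \ref{lem-k44} block by block to get $M_{01}(S')\le -CA_{01}$, bound $M_0,M_1$ trivially by $A_0,A_1$, and globally negate if needed to make $D(S')\le 0$. The paper's proof is just a terser version of the same argument, so there is nothing further to add.
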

\begin{proof}
Since the set of edges $E_{01}$ is a disjoint collection of $r^2$
$K_{4,4}$ graphs, using Lemma \ref{lem-k44} 
one can set the assignment $S$ such that $M_{01}(S) \leq -CA_{01}$.
The maximum values of $M_0(S)$ and $M_1(S)$ are $A_0$ and $A_1$
respectively. Also, by  flipping the sign of $S$ if necessary, 
one can simultaneously
ensure that $D(S)$ is non-positive.  
\end{proof}
Combining the above lemma with Lemma \ref{lem-A0A1} we obtain,
\begin{align}
(C + 2)H^* & \leq -(C+1)(A_0 + A_1) + A_0 + A_1 -
CA_{01}, \nonumber \\
\Rightarrow (C+2)H^* & \leq -C(A_0 + A_1 +
A_{01}),
\nonumber \\
\Rightarrow \left(\frac{C+2}{C}\right)H^* & \leq -(A_0 + A_1 +
A_{01}). \label{eqn-edgebd}
\end{align}
Moreover, an appropriate assignment of $S$ ensures that $D(S) = -B$.
Thus, we also have the following bound,
\begin{equation}
H^* \leq A_0 + A_1 + A_{01} - B. \nonumber
\end{equation}
Combining the above with Equation \eqref{eqn-edgebd} yields,
\begin{align}
\left(\frac{2(C+2)}{C} + 1 \right)H^* \leq & 
-2(A_0 + A_1 +
A_{01}) + (A_0 + A_1 + A_{01} - B), \nonumber \\
\Rightarrow H^* \leq & -\left(\frac{C}{3C + 4}\right)(A_0 + A_1 + 
A_{01} + B) \nonumber
\end{align}  
which completes the proof of Theorem \ref{thm-main2}.
\bibliographystyle{alpha}
\bibliography{Refs}

\end{document}